\newcommand{\al}{\alpha}
\newcommand{\si}{\sigma}
\newcommand{\vp}{\varphi}
\newcommand{\Ga}{\Gamma}
\newcommand{\mL}{\mathcal{L}}
\newcommand{\cA}{{\mathcal A}}
\newcommand{\cB}{{\mathcal B}}
\newcommand{\cH}{{\mathcal H}}
\renewcommand{\leq}{\leqslant}
\renewcommand{\geq}{\geqslant}
\renewcommand{\le}{\leqslant}
\renewcommand{\ge}{\geqslant}
\newcommand{\e}{\mathrm e}
\newcommand{\abs}[1]{\lvert #1\rvert}
\newcommand{\nor}[1]{\lVert #1\rVert}
\newcommand{\bplus}[1]{\{\!#1\hspace{-1pt}\}_+}
\newcommand{\bminus}[1]{\{\!#1\hspace{-1pt}\}_-}
\newcommand{\tr}{\operatorname{tr}}
\newcommand{\beq}{\begin{equation}}
\newcommand{\eeq}{\end{equation}}
\newcommand{\bea}{\begin{eqnarray}}
\newcommand{\eea}{\end{eqnarray}}
\newcommand{\beas}{\begin{eqnarray*}}
\newcommand{\eeas}{\end{eqnarray*}}
\newcommand{\nn}{\nonumber}
\newcommand{\noi}{\noindent}
\newcommand{\tbf}{\textbf}
\newcommand{\bepm}{\begin{pmatrix} }
\newcommand{\epm}{\end{pmatrix}}
\newcommand{\mss}{\kern 1pt}
\newtheorem{proposition}{Proposition}
\newtheorem{theorem}{Theorem}
\newtheorem{lemma}{Lemma}
\newtheorem{definition}{Definition}
\theoremstyle{definition}
\newtheorem{remark}{\textbf{Remark}}
\begin{document}

\author{Jose Carrasco}
\address{Institut f\"ur Theoretische Physik, Universit\"at Innsbruck, Innrain 52, 6020 Innsbruck, Austria}
\email{joseacar@ucm.es}

\author{Giuseppe Marmo} \address{Dipartimento di Fisica, Universit\`a
  Federico II and Sezione INFN di Napoli, Via Cintia, Napoli, Italy}
\email{marmo@na.infn.it}

\author{Piergiulio Tempesta} \address{Instituto de Ciencias
  Matem\'aticas, C/ Nicol\'as Cabrera, No 13--15, 28049 Madrid,
  Spain\\ and Departamento de F\'{\i}sica Te\'{o}rica, Facultad de
  Ciencias F\'{\i}sicas, Universidad Complutense de Madrid, 28040 --
  Madrid, Spain } \email{  p.tempesta@fis.ucm.es, piergiulio.tempesta@icmat.es
}

\title[Entanglement monotones and formal groups]{New computable entanglement monotones \\ from formal group theory}
\date{February 08, 2021}
\maketitle
\begin{abstract}
We present a mathematical construction of new quantum information measures that generalize the notion of logarithmic negativity. Our approach is based on formal group theory. We shall prove that this family of generalized negativity functions, due their algebraic properties,  is suitable for studying entanglement in many-body systems.

Under mild hypotheses,  the new measures are computable entanglement monotones. Also, they are composable: their evaluation over tensor products can be entirely computed in terms of the evaluations over each factor, by means of a specific group law.

In principle, they might be useful to study separability and (in a future perspective) criticality of mixed states, complementing the role of R\'enyi's entanglement entropy in the discrimination of conformal sectors for pure states.

\end{abstract}
\tableofcontents


\section{Introduction}

The study of entanglement of many-body systems represents one of the
most relevant challenges of modern research in quantum physics, due to its intrinsic theoretical interest and its 
applicative relevance.  In this context, the determination of suitable information measures, allowing one to detect the entanglement in complex quantum systems is of outmost importance.
\cite{AFOVRMP2008}.

When analyzing compound systems made
up of spatially separated parties that can communicate with each
other, it is very common and natural to focus on protocols that consist on local operations assisted by classical communication (LOCC); they map the set of separable
states into itself. Operations preserving the positivity of the
density matrix after partial transposition (PPT-operations) are also of special
relevance, since all LOCC  are in particular PPT-protocols, which in turn map the set of states with positive partial transpose into
themselves.

In order to design a quantum information measure $E$ as an entanglement detector, certain conditions should be satisfied regarding LOCC or PPT-protocols. A fundamental requirement is that $E$ should be \textit{non-increasing on average}  under LOCC or PPT-preserving operations
\cite{PV1998,MHorodecki2001,PRHorodecki2001}. Precisely, we require
that
\begin{equation}\label{eq:mon}
  E(\rho)\ge\sum_i p_i E(\rho_i)
\end{equation}
where each of the states $\rho_i$, in a given Hilbert space, is obtained with probability $p_i$
when some LOCC or PPT-operation is applied to $\rho$. Another desirable property is that $E$ should be able to discriminate whether a state is separable or not, or at least, whether
a state belongs to the set of  states invariant under PPT-operations (PPT-invariant). The last condition is weaker: indeed, this set contains the set of separable states; however, there are PPT-invariant states that are not separable (they are said to contain bound entanglement).

An \textit{entanglement monotone} is a quantum measure satisfying both properties, namely it is non-increasing on average and it can discriminate the set of PPT-invariant or separable states.

The construction of quantum measures which fulfill these requirements and admit an explicit, analytic form is an interesting problem. Important examples are provided by the negativity and the
logarithmic negativity, introduced in the seminal
paper~\cite{VW2002}. In particular, in~\cite{Plenio2005} it was shown that the logarithmic negativity is an entanglement monotone.

In recent years, both the negativity and logarithmic negativity have
been largely investigated as entanglement
measures for mixed states~\cite{Plenio2005,PV2007}, as well as for their prominent role in several contexts of quantum field theory, in particular in the case of conformal field theories (CFTs) ~\cite{EZPRB2016,RACPRB2016,CCTPRL2012,CCTJPA2014}.

As is well known, both R\'enyi and Tsallis entropies can detect
criticality in some specific contexts when the von Neumann--Shannon entropy can not. 


There are one-dimensional quantum systems whose von
Neu\-mann--Sha\-nnon entanglement entropy coincides with that of a CFT
but their R\'enyi entanglement entropies do not (for all values of $n$). Consequently, instead of comparing their spectra (which would be the only definitive way of asserting that the quantum critical system is effectively described by a CFT) one can compute their R\'enyi entanglement entropies as a thinner criterion than the von Neumann--Shannon one (see also \cite{CFGTSR2017} for the relevance of R\'enyi's entropy in the study of multi-block entanglement entropy of free fermion systems). Since the standard (non-parametric) negativity has been recently computed for CFTs,  our ultimate goal is to find a meaningful and computable parametric generalization of the negativity in order to provide eventually thinner and more specific criteria to classify universality classes of one-dimensional quantum critical systems.

As a first step in this research program, the aim of this article is essentially of a mathematical nature, namely to establish a general mathematical framework that allows us to construct a new, large class of parametric quantum information measures playing the role of entanglement monotones (or quasi-monotones, as we will prove) for mixed states. More precisely, we will show that
a wide class of generalized entropic information functions can be
defined by means of formal group theory. Due to the fact that these
new group-theoretical functions widely generalize the notion of logarithmic negativity, we shall call them \textit{group negativities}.

As we will show, from a technical point of view,  group negativities are multi-parametric concave functions (generalized logarithms), depending on the $p$-norm of the partial transposition of a quantum state. The trace-norm subclass is recovered when $p=1$ and is related to quantized versions of the group entropies proposed
in~\cite{PT2016PRA}, computed over partially transposed states. Moreover, if the generalized logarithm is chosen to be the standard one, we recover the original logarithmic negativity introduced in~\cite{VW2002}.

More specifically, the \textit{trace-norm subclass} is composed by multi-parametric computable measures of entanglement. Indeed,  as a consequence of Proposition~\ref{prop:2} and under the  hypotheses of validity of the Peres criterion \cite{PeresPRL1996}, any trace-norm group negativity allows one to detect entanglement in mixed bipartite states: the \textit{strict positivity} of the functional is sufficient
to ensure that the state is entangled. We remind that Peres's criterion (positivity of the partial transpose of a state) is a necessary and sufficient condition for the separability of $2\times 2$ and $2\times 3$ systems, and is still necessary in higher dimensions~\cite{HHHPLA1996}. For the trace-norm subclass of group negativities, our main results are the following:
\begin{enumerate}
\item They are \textit{computable} measures of entanglement and
  provide separability tests for bipartite mixed states.
\item They are entanglement monotones.
\item They are \textit{composable}: each group negativity can be computed for a pure separable state in
  terms of the group negativities of each of its (non necessarily
  pure) reductions.
\end{enumerate}

The composability property is guaranteed by the specific functional
form of group negativities and could be important in the context of
distillability. We remind that if we have multiple copies of a bipartite
state $\rho$, its distillation rate is the best ratio between the
number of maximally entangled pairs which can be obtained from it
(distilled) by means of some LOCC and the number of copies of
the original state needed. The group negativity of the $n$ copies of
$\rho$ can be expressed through the group negativity of $\rho$.

Besides, as we shall see, generalized negativities constructed with the standard logarithm and $p$-norms with $p>1$ (\textit{p-norm group negativities}) are potentially interesting as well, since they represent auxiliary information tools that could be useful to determine bounds to distillability rates for distillation processes, in different scenarios. We observe that the simplest $p$-norm group
negativity (see Eq.~\eqref{eq:paddneg} below) is associated to the
additive formal group law (that is to say, the generalized logarithm is
the standard one) for all $p\ge1$ as in Definition~\ref{def:logpnorm};
we shall prove in Theorem~\ref{th:distill} that they provide upper
bounds for the entropy of distillation.

Concerning genuine $p$-norm group negativities, our main result, obtained under mild conditions, is that they are \textit{quasi-monotones} in the
sense that their increasing after an LOCC (or trace-preserving CP-PPT) operation is
bounded on average by a non-negative function $k(p)$ independent of the state
considered. This function can be made arbitrarily small, in the limit $p\to 1$.


From a mathematical point of view, the construction of group
negativities relies on the theory of formal
groups~\cite{Bochner1946,Haze}, which represents an important branch
of algebraic topology, with many applications in combinatorics and
number theory (see e.g. \cite{PT2010ASN,PT2015TRAN}). According to the arguments exposed above, we consider the composability property essential in order to discuss distillability, since the underlying formal group law controls how the
available information is redistributed when independent subsystems are combined into a new one. 

The article is organized as follows. In Section~\ref{Sec2}, we shall review briefly the basic aspects of formal group theory, with special emphasis on its role in the theory of generalized entropies.  In Section~\ref{NPPT}, the notion of negativity is discussed. In Section~\ref{Sec3}, the main definitions of the class of $p$-norm negativity  are introduced and their main properties are
proved. The  subclass of the trace-norm group negativities is discussed in Section~\ref{TNGN}. An additive $p$-norm is introduced in Section~\ref{Sec4}. Some open problems and future
perspectives are discussed in the final Section~\ref{Sec5}. 

\section{Groups and entropies: a general approach}\label{Sec2}
We shall start by reviewing  some aspects of the group-theoretical
classification of generalized entropies; also, we shall describe how this approach can be used in
our formulation of generalized negativities. Some definitions of  formal group theory will be presented in the Appendix (see also~\cite{Haze} for a thorough exposition, and~\cite{Serre1992} for a
shorter introduction to the topic).

\subsection{The Composability Axiom}

The notion of composability, introduced in \cite{Tsallis2009}, has been put in axiomatic form in~\cite{PT2016AOP}, \cite{PT2016PRA}, \cite{TJ2020SR} and related to formal group theory via the notion of group entropy.
We shall briefly discuss these concepts of composability and group entropies as in~\cite{PT2016AOP}, \cite{TJ2020SR} in order to illustrate the potential relevance
of the group-theoretical machinery in the study of composite quantum systems. We also mention that the relevance of the notion of group entropy in classical information geometry has been elucidated in \cite{RRT2019PRA}.

Let $\{p_i\}_{i=1,\cdots,W}$, with $W> 1$, $\sum_{i=1}^{W}p_i=1$ be a discrete probability distribution; we denote by $\mathcal{P}_{W}$ the set of all discrete probability distributions with $W$ entries. Let $S$ be a real function defined on $\mathcal{P}_{W}$.

\begin{definition}\label{composab} We shall say that $S$ is strictly \textit{composable} if there exists a continuous function
  of two real variables $\Phi(x,y)$ such that the following properties
  are satisfied.
  \begin{enumerate}[label={(C\arabic*)}]
  \item Composability: $S(P_A\times P_B)=\Phi(S(P_A),S(P_B))$, where $A$ and $B$ are two arbitrary statistically independent systems with associated probability distributions $P_A$ and $P_B$, respectively. \label{con:com}\\

  \item Symmetry: $\Phi(x,y)=\Phi(y,x)$.\label{con:symm}\\

  \item Associativity: $\Phi(x,\Phi(y,z))=\Phi(\Phi(x,y),z)$\label{con:asoc}\\

  \item Null-composability: $\Phi(x,0)=x$\label{con:null}
  \end{enumerate}
\end{definition}

Observe that the mere existence of a function $\Phi(x,y)$ taking care of the composition
process as in $\ref{con:com}$ is necessary, but  not sufficient to ensure that a given entropy may be suitable for information-theoretical or thermodynamic purposes: this function must satisfy all the requirements above to be admissible. Indeed, in general the entropy of the system compounded by the subsystems $A$ and $B$
should not vary if we exchange the labels $A$ and $B$, thus justifying
condition~\ref{con:symm}. In the same vein, condition~\ref{con:asoc} guarantees the composability of more than two systems in an associative way, this
property being crucial to define a zeroth law.  Finally, condition~\ref{con:null} in our opinion is also necessary since if we compound two systems $A$ and $B$ and the latter has zero entropy, then the total entropy must coincide with that of the former.

The set of requirements \ref{con:symm}--\ref{con:null} altogether represent the \textit{composability axiom}, which replaces the additivity axiom in the set of the four Shannon-Khinchin axioms.
These axioms, introduced by Shannon and Khinchin as conditions for an uniqueness theorem for the Boltzmann entropy, represent fundamental, non-negotiable requirements that an entropy $S[p]$ should satisfy to be physically meaningful: continuity with respect to all variables $p_1,\ldots,p_W$, maximization over the uniform distribution, expansibility (adding an event of zero probability does not affect the value of $S[p]$).

\begin{definition} \label{def:groupentropy}
A group entropy is a function $S: P_{W}\to \mathbb{R}^{+} \cup \{0 \}$ which satisfies the  Shannon-Khinchin axioms (SK1)-(SK3) and the composability axiom (C1)-(C4).
\end{definition}

Our construction of group negativities is inspired by this notion.

A function $\Phi(x,y)$ satisfying the properties \ref{con:symm}--\ref{con:null} is a formal group law. This is the origin of the connection between entropic measures and formal group theory (presented for the first time in \cite{PT2011PRE}),  as we shall illustrate in the forthcoming considerations. Among the examples of group entropies, we mention the class of $Z$-entropies, considered in \cite{PT2016PRA}.


We will show that the notion
of logarithmic negativity can be generalized by means of a
mathematical formalism based on formal group theory. Our main result is the
following: there exists a ``tower'' of new, parametric information measures, each of them reducing to the logarithmic negativity in a certain regime. 

Many other approaches to generalized entropies have been proposed in the literature. In particular, we mention the one developed in \cite{IS2014PHYSA} (strongly pseudo-additive entropies) and the one advocated in \cite{JK2019PRL}, based on statistical inference theory. These two approaches as well as the group-theoretical allow to define from an independent perspective  classes of entropies possessing many interesting properties.
%

\subsection{Group logarithms}
Our construction relies on the notion of group logarithm associated to every formal group law. There is a certain freedom concerning the regularity properties in its definition, depending on the  application under consideration. The standard logarithm is associated to the additive formal group law. 
\begin{definition}
 A group logarithm is a strictly increasing and strictly concave function $\log_{G}: (0,\infty)\to \mathbb{R}$, with $\log_{G}(1)=0$ (possibly depending on a set of real parameters), satisfying a functional equation of the form
\beq
\log_{G}(xy)= \chi(\log_{G}(x),\log_{G}(y)) \label{glog}
\eeq
where 
$\chi(x,y)$ fulfills the requirements $(C2)$-$(C4)$. This equation will be called the group law associated with $\log_{G}(\cdot)$.
\end{definition}


An useful result is the following, simple proposition.

\begin{proposition}
Let $G: \mathbb{R} \to \mathbb{R}$ be a strictly increasing function, vanishing at zero. The function $\Lambda_{G}(x)$ defined by
\beq
\Lambda_{G}(x):= G\left(\ln x^{\gamma}\right), \qquad x>0, \qquad \gamma> 0 \label{Glog}
\eeq
satisfies  a functional equation of the form \eqref{glog}.
\end{proposition}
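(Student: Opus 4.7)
The plan is to exploit the additive structure of the logarithm and the invertibility of $G$ to construct $\chi$ explicitly, and then verify the required properties (C2)--(C4) by transporting the corresponding properties of ordinary addition through $G$ and $G^{-1}$.

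First, I would note that since $G:\mathbb{R}\to\mathbb{R}$ is strictly increasing, it is injective, so the inverse function $G^{-1}$ is defined on the image $G(\mathbb{R})$ and is itself strictly increasing. Moreover, since $G$ vanishes at zero, $G^{-1}(0)=0$. Next, I would observe the key algebraic identity coming from the logarithm:
\begin{equation*}
\ln(xy)^{\gamma} = \gamma\ln(xy) = \gamma\ln x + \gamma\ln y = \ln x^{\gamma} + \ln y^{\gamma},
\end{equation*}
so that $\Lambda_{G}(xy) = G\bigl(\ln x^{\gamma}+\ln y^{\gamma}\bigr)$.

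The candidate group law is then obtained by writing $u=\Lambda_{G}(x)$ and $v=\Lambda_{G}(y)$, so that $\ln x^{\gamma}=G^{-1}(u)$ and $\ln y^{\gamma}=G^{-1}(v)$, and defining
\begin{equation*}
\chi(u,v) := G\bigl(G^{-1}(u)+G^{-1}(v)\bigr).
\end{equation*}
By construction $\Lambda_{G}(xy)=\chi(\Lambda_{G}(x),\Lambda_{G}(y))$, which is precisely the functional equation \eqref{glog}. It remains to verify (C2)--(C4). Symmetry (C2) is immediate because $G^{-1}(u)+G^{-1}(v)=G^{-1}(v)+G^{-1}(u)$. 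For associativity (C3), a direct expansion gives
\begin{equation*}
\chi(u,\chi(v,w)) = G\bigl(G^{-1}(u)+G^{-1}(v)+G^{-1}(w)\bigr) = \chi(\chi(u,v),w),
\end{equation*}
using $G^{-1}(G(t))=t$ on the inner composition. Finally, null-composability (C4) follows from $G^{-1}(0)=0$:
\begin{equation*}
\chi(u,0)=G\bigl(G^{-1}(u)+0\bigr)=G(G^{-1}(u))=u.
\end{equation*}

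There is no serious obstacle here; the only point requiring mild care is the domain of $\chi$. Strictly speaking, $\chi$ is defined on $G(\mathbb{R})\times G(\mathbb{R})$, and one must check that the argument $G^{-1}(u)+G^{-1}(v)$ lies in $\mathbb{R}$ (which is automatic) and that the result $\chi(u,v)$ lies again in $G(\mathbb{R})$ so that the associativity computation makes sense; this is clear because $G$ maps $\mathbb{R}$ into its image. If one wishes $\chi$ to be continuous (as required in Definition~\ref{composab}), some further mild regularity of $G$ (e.g.\ continuity, which is automatic for monotone surjections onto an interval) should be added, but this is not part of the statement and can be absorbed into the standing hypotheses on $G$.
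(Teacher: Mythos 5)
Your proof is correct and follows essentially the same route as the paper: both exploit $\ln(xy)^{\gamma}=\ln x^{\gamma}+\ln y^{\gamma}$ to arrive at $\chi(u,v)=G\bigl(G^{-1}(u)+G^{-1}(v)\bigr)$. Your explicit verification of (C2)--(C4) and the remarks on the domain of $\chi$ are slightly more detailed than the paper, which simply asserts that this $\chi$ ``is indeed a group law,'' but the argument is the same.
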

\begin{proof}
Observe that
\begin{eqnarray*}
\Lambda_{G}(xy)&=& G\left(\ln x^{\gamma}+\ln y^{\gamma}\right)=G\left(G^{-1}(\Lambda_{G}(x))+G^{-1}(\Lambda_{G}(y))\right)\\
\nn &=& \chi(\Lambda_{G}(x),\Lambda_{G}(y)).
\end{eqnarray*} 
\noi In other words, we have
\beq
\chi(x,y)=G(G^{-1}(x)+G^{-1}(y)), \label{GPsi}
\eeq
which is indeed a group law.
\end{proof}
We shall consider functions of the form $G(t)= t+ O(t^2)$ for $t\to 0$, in such a way that $\chi(x,y)=x+y+\ldots$.
\begin{remark}
\noi From now on, we shall focus on group logarithms of the form
\beq\label{eq:GL}
\log_{G}(x)= G(\ln x) \ .
\eeq
Here $G$ is a suitable, strictly increasing function (vanishing at $0$), assuring concavity of $\log_{G}(x)$ (
see \cite{PT2016PRA} for a general discussion of sufficient conditions for $G$ to comply with these requirements).  

The formal inverse of a group logarithm will be called the associated group exponential; it is defined by
\beq
\exp_{G}(x)= e^{G^{-1}(x)}. \label{Gexp}
\eeq
\end{remark}
\begin{remark} A first, relevant example of nontrivial group logarithm is given by
the so called $q$-logarithm. We have
\beq \label{qlog}
 G(t)=\frac{\e^{(1-q)t}-1}{1-q},\qquad\log_q(x)=G(\ln
  x)=\frac{x^{1-q}-1}{1-q}, \qquad q>0 \,.
  \eeq
\noi This logarithm has
been largely investigated in connection with nonextensive statistics \cite{Tsallis1988}, \cite{Tsallis2009}.

Concerning group exponentials, notice that when $G(t)=t$, we obtain the standard exponential; when as before $G(t)=\frac{e^{(1-q)t}-1}{1-q}$, we recover the $q$-exponential $e_{q}(x)=\left[1+(1-q)x\right]^{\frac{1}{1-q}}_{+}$, and so on.
\end{remark}
Infinitely many other examples of group logarithms and exponentials are provided, for instance, in \cite{PT2011PRE}.

\subsection{Group entropies}
The general family of group entropies is usually defined in terms of generalized logarithms, which in turn allow us to realize the composition laws for the entropies in terms of formal group laws. We remind that a particular instance of the family of group entropies is the class of $Z$-entropies defined in~\cite{PT2016PRA} within the approach described above. Their
general form, for $\al >0$, is
\begin{equation}\label{eq:Z}
  Z_{G,\alpha}(p_1,\ldots,p_W):=\frac{\log_G\Big(\sum_{i=1}^{W}p_i^\al\Big)}{1-\al} \ ,
\end{equation} 
where   $\log_G$ is a group logarithm. 
We also mention that a widely generalized, relative-entropy version of this class, the  $(h,f)$-\textit{divergence}, has been introduced in \cite{RRT2019PRA}, in the context of information theory. Indeed, given two probability distributions $P,Q$ such that $p_i, q_i>0$, $i=1, \ldots, W$, we can immediately deduce the \textit{relative $Z$-class}:
\begin{equation}\label{eq:relGE}
  Z_{G,\alpha}(P || Q):=\log_G\Bigg[\Bigg(\sum_{i=1}^{W}p_i^{\al} q_{i}^{1-\alpha}\Bigg)^{\frac{1}{\alpha-1}}\Bigg] \ ,
\end{equation} 
$\alpha\neq 1$, which is nothing but a special version of the $(h,f)$-divergence, for $h(x)= G(\ln x^{\gamma})$,  $f(x)= x^{\alpha}$, $\gamma= \frac{1}{\alpha-1}$. Clearly, alternative classes are easily obtainable from the $(h,f)$-relative entropies of \cite{RRT2019PRA} by means of different choices of $h$ and $f$. 
To conclude this review, we recall that more general classes of group entropies have been defined in \cite{RRT2019PRA}, \cite{T2020CHA} and  \cite{TJ2020SR}. 

\vspace{3mm}

In this section, the group-theoretical formalism necessary for the formulation of our main results has been established. From a conceptual point of view, the main motivation for introducing this approach relies in the following observation: As the logarithmic negativity is associated to the additive formal group law, so it is natural to define new  entanglement measures associated to different, nonadditive composition laws by means of formal group theory.

\section{Negativity and PPT operations} \label{NPPT}

We introduce now the quantum formalism relevant in the forthcoming discussion. Let us denote by $\cB_1$ and $\cB_2$ the space of bounded linear
operators of the Hilbert spaces $\cH_1$ and $\cH_2$ respectively. For
a bipartite mixed state $\rho\in\cB_1\otimes\cB_2$, let us denote by
$\rho^\Ga$ its partial transposition with respect to $\cH_2$ (the
final result will not change if we choose $\cH_1$ in this
definition). The action of partial transposing is defined in the space
$\cB$ of bounded linear operators of the Hilbert space
$\cH=\cH_1\otimes\cH_2$ by extending (by linearity) the action
over pure separable states $\si\otimes\tau\in\cB$, with $\si\in\cB_1$ and $\tau\in\cB_2$:
\[
  (\si\otimes\tau)^\Ga=\si\otimes\tau^{\text{\tiny T}}
\]
where $\tau^{\text{\tiny T}}\in\cB_2$ is the transpose of $\tau$.

\begin{definition}\label{def:abs}
  Given an element $A\in\cB$, we introduce $\abs{A}:=\bplus{A}-\bminus{A}$ where $\{\cdot\}_+$ and $\{\cdot\}_-$ are its positive and negative parts, i.e., its restrictions to the eigenspaces of positive and negative eigenvalues respectively. The trace-norm $\nor{\cdot}_1$ of an operator $A$ is defined as $\nor{A}_1=\tr\abs{A}$.
\end{definition}

Note that $A=\bplus{A}+\bminus{A}$; if $A$ is Hermitian, then $\bplus{A}-\bminus{A}=\sqrt{AA^\dagger}$ where $\sqrt{B}$ represents any operator
$C\in\cB$ such that $C^2=B\in\cB$.

\begin{definition}
  Given a bipartite mixed state $\rho$, its negativity is defined to be the function $N(\rho):= \frac12(\nor{\rho^{\Ga}}_1-1)$, while its logarithmic negativity  is the function $L(\rho):=\ln\nor{\rho^{\Ga}}_1$.
\end{definition}

The monotonicity of $L(\rho)$ was proved in~\cite{Plenio2005}. Precisely, the inequality 
\[
  L(\rho)\ge\sum p_i L(\rho_i) 
\]
holds, where $\rho_i\propto\cA_i(\rho)$ is the normalized state associated
to outcome $i$ after applying the trace-preserving completely positive
 operation $\cA=\sum_i\cA_i$. Note that $\cA$ maps the set of
PPT states into itself and also that the result of applying $\cA$ to
$\rho$ can be seen as an ensemble with elements $\rho_i$ appearing
with probabilities $p_i=\tr\cA_i(\rho)$. The logarithmic negativity
$L$ is also an upper bound to distillable entanglement, as was shown
in \cite{VW2002}.

%

\section{$p$-norm group negativities}\label{Sec3}

\subsection{Definitions}
As before, we shall consider a composite quantum system, whose
associated Hilbert space $\cH$ has
dimension $N$; we shall denote by $\cB(\cH)$ the linear space of
bounded linear operators on $\cH$.

Consider the Scatten $p$-norms
\[
  \nor{A}_p=\big((s_1(A))^p+\cdots+(s_N(A))^p\big)^{1/p}, \qquad p\geq 1
\]
for any $A\in\cB$ with singular values $s_i(A)$; the limit $p\to\infty$ will be denoted by $\nor{\cdot}_\infty$.
We introduce now the main objects of our analysis.

\begin{definition}
The function $\mL_{G,p}:\cB\to\mathbf{R}$, defined for any state
  $\rho\in\cB$ and $p\geq 1$ as
\beq \label{eq:pgenneg}
\mL_{G,p}(\rho):=\log_G\nor{\rho^\Ga}_p
\eeq
 is said to be the $p$-norm  group  negativity of the state $\rho$. Here $\log_G(\cdot)$ is a group logarithm of the form \eqref{eq:GL}.
\end{definition}
Clearly, trace-norm group negativities are obtained when $p=1$.  
  A simple but interesting, new case is the additive one, obtained
when $\log_G(x)=\ln x$ for $p>1$.

\begin{definition} \label{def:logpnorm} The function
  \beq \label{eq:paddneg} \mathcal{L}_{p}(\rho)=\ln \nor{\rho^\Ga}_p, \qquad p\geq 1
  \eeq
  will be called the logarithmic p-norm negativity of a mixed
  state $\rho$.
\end{definition}
\begin{remark} Obviously, in information-theoretical applications one could replace  $\ln(x)$ with $\log_2(x)$ in
  Eq. \eqref{eq:paddneg} (as in the standard definition for $p=1$, by Vidal and Werner in \cite{VW2002}), without altering the
  main properties of the function.
\end{remark}
We will show that the quantity $\mL_{G,p}(\cdot)$ is bounded on
average under LOCC (trace-preserving CP-PPT) operations. This bound can be made arbitrarily
close to zero (and in particular is exactly zero in the limit
$p=1$).

We shall first deal with a deterministic trace-preserving CP-PPT
operation $\cA$; then we shall consider a general, not necessarily
deterministic operation, which maps a state $\rho$ into an ensemble of
states $\rho_i=\cA_i(\rho)$, each appearing with probability
$p_i=\tr\cA_i(\rho)$ (where each operation $\cA_i$ is a CP-PPT operation and
$\sum_i\cA_i$ is trace-preserving). Let us regard the partial
transposition as an operator $\Ga:\cB\to\cB$, with
$\Ga(\rho)=\rho^\Ga$. This operation is clearly involutive. We can define a linear map $\cA^\Ga:\cB\to\cB$ as
$\cA^\Ga(\si):=\Ga\circ\cA\circ\Ga(\si)$; equivalently, $\cA^\Ga\circ\Ga=\Ga\circ\cA$.

We propose a simple Lemma, useful in the forthcoming discussion.
\begin{lemma}\label{lem:PsiGaPos}
  Let $\cA:\cB\to\cB$ be any PPT quantum operation. If $\cA$ is
  positive and preserves the positivity of the partial
  transpose, then its partial transpose $\cA^\Ga$ is also positive.
\end{lemma}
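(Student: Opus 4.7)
The plan is to prove positivity of $\cA^\Ga$ by a direct unpacking of its definition, so that the two stated hypotheses combine in a very transparent way. First I would take an arbitrary $\si\in\cB$ with $\si\ge 0$ and, using the involutivity of $\Ga$ noted just before the lemma together with the defining identity $\cA^\Ga=\Ga\circ\cA\circ\Ga$, rewrite
\[
\cA^\Ga(\si)\;=\;\bigl(\cA(\si^\Ga)\bigr)^\Ga.
\]
It will therefore suffice to show that $\cA(\si^\Ga)$ has positive partial transpose.

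Next I would set $X:=\si^\Ga$ and observe that $X^\Ga=(\si^\Ga)^\Ga=\si\ge 0$; in other words, $X$ is a Hermitian operator whose partial transpose is positive. The hypothesis that $\cA$ preserves positivity of the partial transpose then yields $(\cA(X))^\Ga\ge 0$, which is exactly $\cA^\Ga(\si)\ge 0$. Quantifying over $\si\ge 0$ delivers the positivity of $\cA^\Ga$, completing the argument.

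The point that will require the most care is conceptual rather than computational: the operator $X=\si^\Ga$ to which I am applying the PPT-preservation hypothesis need not itself be positive, only its partial transpose is. So the hypothesis must be understood in the strong, functional sense—namely, that $\cA$ sends every Hermitian operator with positive partial transpose to one with positive partial transpose—rather than merely as the state-level condition that PPT states go to PPT states. Under the weaker state-only reading the lemma actually fails, as one can see by taking $\cA(\rho)=\ket{\psi}\bra{\psi}\tr\rho$ for an entangled $\ket{\psi}$: this map is positive and maps any input state to a PPT state (the rank-one target), yet $\cA^\Ga(\si)=\ket{\psi}\bra{\psi}^\Ga\tr\si$ fails to be positive. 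Highlighting this is the crucial clarification in the writeup. The assumed positivity of $\cA$ itself plays only an auxiliary role, ensuring Hermiticity of the intermediate quantities so that the positivity statements appearing in the proof are well posed.
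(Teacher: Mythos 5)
Your core argument is correct and is essentially the paper's own proof: both rest on the substitution $\rho=\Ga(\si)$ combined with the involutivity of $\Ga$ and the reading of ``preserves positivity of the partial transpose'' as $\rho^\Ga\ge 0 \Rightarrow (\cA(\rho))^\Ga\ge 0$ for arbitrary Hermitian $\rho$, not just for states. Your observation that this strong, functional reading is genuinely needed (since $X=\si^\Ga$ need not be positive) is a fair and useful gloss --- the paper's proof uses the hypothesis in exactly this way without comment. However, your proposed counterexample to the weaker, state-level reading is wrong: for an entangled pure state $\ket{\psi}$, the projector $\ket{\psi}\bra{\psi}$ is \emph{not} PPT (its partial transpose has negative eigenvalues), so the map $\cA(\rho)=\ket{\psi}\bra{\psi}\,\tr\rho$ fails even the state-level hypothesis and therefore separates nothing; and if one instead takes the fixed output to be a genuine PPT state $\sigma_0$, then $\cA^\Ga(\si)=\sigma_0^\Ga\,\tr\si\ge 0$, so no contradiction arises. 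Either delete that example or replace it with one that actually satisfies the weak hypothesis while violating the conclusion; the lemma's proof itself does not depend on it.
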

\begin{proof}
  Since $\cA$ is a PPT operation, $\Ga\circ\cA(\rho)$ is  positive if $\Ga(\rho)$
  is positive.  Due to the fact that
  $\Ga$ is an involution, writing $\si=\Ga(\rho)$ and $\rho=\Ga(\si)$ we conclude that
  $\Ga\circ\cA\circ\Ga(\si)$ is  positive if $\Ga\circ\Ga(\si)=\si$ is
 positive. The result follows from the relation  $\cA^\Ga=\Ga\circ\cA\circ\Ga$.
\end{proof}

The following statement is due to Plenio~\cite{Plenio2005}.

\begin{lemma}\label{lem:Plenio}
  Let $\cA:\cB\to\cB$ be a trace-preserving completely positive operation. Then
  $\tr\abs{\cA(\rho)}\le\tr\abs{\rho}$.
\end{lemma}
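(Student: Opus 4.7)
The plan is to reduce to a Jordan-type decomposition of $\rho$ and then exploit the fact that for a positive operator $B$, the trace-norm equals the trace, so that positivity of $\cA$ together with trace preservation pinches the estimate from above. Since $\rho$ is Hermitian (this is the only case in which $\abs{\cdot}$ from Definition~\ref{def:abs} is meaningful), I would first write, as in Definition~\ref{def:abs},
\[
  \rho = \bplus{\rho} - \bminus{\rho}, \qquad \abs{\rho} = \bplus{\rho} + \bminus{\rho},
\]
with $\bplus{\rho}, \bminus{\rho}\ge 0$ supported on orthogonal subspaces. This gives $\tr\abs{\rho} = \tr\bplus{\rho} + \tr\bminus{\rho}$.

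Next I would apply $\cA$, which is linear, obtaining $\cA(\rho) = \cA(\bplus{\rho}) - \cA(\bminus{\rho})$. Since complete positivity implies ordinary positivity, both $\cA(\bplus{\rho})$ and $\cA(\bminus{\rho})$ are positive operators in $\cB$, hence self-adjoint; in particular $\cA(\rho)$ is self-adjoint, so $\tr\abs{\cA(\rho)}=\nor{\cA(\rho)}_1$ is well defined via Definition~\ref{def:abs}. Then the triangle inequality for the trace-norm, together with the identity $\nor{B}_1 = \tr B$ valid for any positive $B$, yields
\[
  \tr\abs{\cA(\rho)} = \nor{\cA(\bplus{\rho}) - \cA(\bminus{\rho})}_1 \le \nor{\cA(\bplus{\rho})}_1 + \nor{\cA(\bminus{\rho})}_1 = \tr\cA(\bplus{\rho}) + \tr\cA(\bminus{\rho}).
\]

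Finally, invoking the trace-preserving property of $\cA$ on the positive operators $\bplus{\rho}$ and $\bminus{\rho}$ separately, the right-hand side collapses to $\tr\bplus{\rho} + \tr\bminus{\rho} = \tr\abs{\rho}$, which is the desired inequality. There is essentially no serious obstacle: the only point worth being careful about is that the Jordan decomposition splits $\rho$ into two \emph{separately positive} summands, so that positivity of $\cA$ (rather than the stronger complete positivity) already suffices to push the individual pieces into the positive cone, allowing the triangle inequality and trace preservation to act independently on each. Complete positivity enters only insofar as it guarantees positivity.
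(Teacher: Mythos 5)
Your proof is correct, and it reaches the conclusion by a genuinely different route from the paper's. The paper does not split $\rho$ and then apply the triangle inequality; it first establishes the operator inequality $\bplus{\cA(\sigma)}\le\cA(\bplus{\sigma})$ (because $\cA(\bminus{\sigma})\le 0$ forces $\cA(\sigma)\le\cA(\bplus{\sigma})$ with a positive right-hand side), applies this to $\sigma=\pm\rho$ to get the pointwise operator bound $\abs{\cA(\rho)}\le\cA(\abs{\rho})$, and only then takes traces and uses trace preservation. Your argument instead decomposes $\rho$ into its two positive Jordan summands at the outset and invokes the triangle inequality $\nor{A+B}_1\le\nor{A}_1+\nor{B}_1$ together with $\nor{B}_1=\tr B$ for $B\ge 0$; this is the shorter, textbook route, but it outsources the essential content to the subadditivity of the trace norm, which is not immediate from the paper's Definition~\ref{def:abs} and, if proved from scratch, would require essentially the same positive/negative-part bookkeeping that the paper carries out explicitly. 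What the paper's version buys is the slightly stronger intermediate statement $\abs{\cA(\rho)}\le\cA(\abs{\rho})$ as an operator inequality, not merely its trace. Two minor remarks: your sign convention ($\bminus{\rho}\ge 0$ with $\rho=\bplus{\rho}-\bminus{\rho}$) is the opposite of the paper's Definition~\ref{def:abs}, where $\bminus{A}\le 0$ and $A=\bplus{A}+\bminus{A}$, so you should not present it as being ``as in'' that definition; and your closing observation that only positivity plus trace preservation (not complete positivity) is actually used is accurate and is equally true of the paper's proof.
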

\begin{proof}
  Note that
  \begin{equation}\label{eq:ineq}
    \bplus{\mss\cA(\cdot)\mss}=\bplus{\mss\cA(\{\cdot\}_+)+\cA(\{\cdot\}_-)\mss}\le\cA(\{\cdot\}_+) \ ,
  \end{equation}
  due to the fact that $\cA(\{\cdot\}_-)=-\cA(-\{\cdot\}_-)$ and $-\{\cdot\}_-$ is
  positive or zero, so that by linearity $\cA(\{\cdot\}_-)$ is
  negative or zero. Also, observe that
  \begin{multline*}
    \abs{\cA(\rho)}=\{\cA(\rho)\}_+-\{\cA(\rho)\}_-=\{\cA(\rho)\}_++\{-\cA(\rho)\}_+\\
    =\{\cA(\rho)\}_++\{\cA(-\rho)\}_+\le
    \cA(\rho_+)+\cA(\{-\rho\}_+) \ .
  \end{multline*}
Here we have used  twice inequality~\eqref{eq:ineq}. The result follows by noting
  that $\rho_-=-\{-\rho\}_+$ and taking into account linearity in the RHS of the
  last inequality to obtain $\abs{\rho}=\rho_+-\rho_-$ as the argument  of $\cA$.
\end{proof}

\begin{definition}
A group logarithm $\log_{G}(x)$ such that
\beq \label{eq:subadd}
\log_G(xy)\le\log_Gx+\log_G y
\eeq
will be said to be subadditive.
\end{definition}
In order to prove the main results of this section, we state the following
\begin{lemma} \label{lemma1} The inequality
  \begin{equation}
    \sum_i p_i\nor{\rho_i^\Ga}_p\le\tr\abs{\rho^\Ga}
  \end{equation}
holds for $p\geq 1$ under trace-preserving CP-PPT operations.
\end{lemma}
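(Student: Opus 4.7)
The plan is to reduce the $p$-norm bound to the $p=1$ case (Plenio's Lemma~\ref{lem:Plenio}) by means of the elementary Schatten inequality $\|X\|_p \le \|X\|_1$ valid for $p\ge 1$, and then to adapt Plenio's argument to the stochastic setting, i.e., to a sum of CP-PPT branches rather than a single trace-preserving map. Throughout I will use the identity $\mathcal A_i(\rho)^\Ga = \mathcal A_i^\Ga(\rho^\Ga)$, which is immediate from the definition $\mathcal A_i^\Ga := \Ga\circ\mathcal A_i\circ\Ga$ and the involutivity of $\Ga$.

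First I would rewrite the left-hand side using homogeneity: since $\mathcal A_i(\rho)=p_i\rho_i$ with $p_i=\tr\mathcal A_i(\rho)$, one has
\[
  p_i\,\nor{\rho_i^\Ga}_p \;=\; \nor{\mathcal A_i(\rho)^\Ga}_p \;=\; \nor{\mathcal A_i^\Ga(\rho^\Ga)}_p .
\]
Next I would invoke the fact that for any finite-rank operator $X$ and any $p\ge 1$, the Schatten $p$-norm is dominated by the trace-norm, $\nor{X}_p \le \nor{X}_1$; this follows at once from the elementary inequality $\sum_j s_j^p \le \bigl(\sum_j s_j\bigr)^p$ for non-negative singular values $s_j$ (observing that $s_j/S\le 1$ with $S=\sum_j s_j$). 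Applying this to $X=\mathcal A_i^\Ga(\rho^\Ga)$ and summing over $i$ gives
\[
  \sum_i p_i\,\nor{\rho_i^\Ga}_p \;\le\; \sum_i \nor{\mathcal A_i^\Ga(\rho^\Ga)}_1 \;=\; \sum_i \tr\bigl|\mathcal A_i^\Ga(\rho^\Ga)\bigr| .
\]

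The remaining step is to bound the last sum by $\tr|\rho^\Ga|$. Here I would adapt the proof of Lemma~\ref{lem:Plenio} to the instrument $\{\mathcal A_i^\Ga\}$. By Lemma~\ref{lem:PsiGaPos}, each $\mathcal A_i^\Ga$ is a positive map (the hypothesis that each $\mathcal A_i$ is CP-PPT is used precisely here). Repeating Plenio's one-line argument branch by branch yields, for $\sigma = \rho^\Ga$,
\[
  |\mathcal A_i^\Ga(\sigma)| \;=\; \{\mathcal A_i^\Ga(\sigma)\}_+ + \{\mathcal A_i^\Ga(-\sigma)\}_+ \;\le\; \mathcal A_i^\Ga(\sigma_+) + \mathcal A_i^\Ga(-\sigma_-) \;=\; \mathcal A_i^\Ga(|\sigma|),
\]
where positivity of $\mathcal A_i^\Ga$ is used to guarantee that $\mathcal A_i^\Ga(\{-\sigma\}_-)\le 0$ and $\mathcal A_i^\Ga(-\sigma_-)\ge 0$. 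Taking traces and summing over $i$ gives $\sum_i \tr|\mathcal A_i^\Ga(\sigma)| \le \tr\,\mathcal A^\Ga(|\sigma|)$. Finally, since $\mathcal A=\sum_i\mathcal A_i$ is trace-preserving and transposition preserves the trace, $\mathcal A^\Ga$ is also trace-preserving, so $\tr\,\mathcal A^\Ga(|\sigma|) = \tr|\sigma| = \tr|\rho^\Ga|$, which closes the argument.

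The main conceptual obstacle is the step where $p>1$ is reduced to $p=1$: one might worry that a tighter bound is possible, but the Schatten inequality $\nor{\cdot}_p\le\nor{\cdot}_1$ is exactly what allows Plenio's trace-norm estimate to be recycled; the price paid — only a $p=1$-type bound on the RHS — is harmless here because it is still independent of $\rho_i$ and precisely what is needed to control the later quasi-monotonicity results. The genuine structural input is Lemma~\ref{lem:PsiGaPos}, without which the partially transposed branches $\mathcal A_i^\Ga$ would not be positive and Plenio's argument could not be applied branch by branch.
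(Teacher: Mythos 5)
Your proof is correct and follows essentially the same route as the paper's: rewrite $p_i\nor{\rho_i^\Ga}_p$ as $\nor{\cA_i^\Ga(\rho^\Ga)}_p$, dominate the Schatten $p$-norm by the trace norm via $\tr A^p\le(\tr A)^p$, apply Plenio's estimate $\abs{\cA_i^\Ga(\sigma)}\le\cA_i^\Ga(\abs{\sigma})$ to each branch, and use trace preservation of $\sum_i\cA_i^\Ga$. Your only (welcome) refinement is to note explicitly that the individual branches $\cA_i^\Ga$ are merely positive rather than trace-preserving, so one must invoke the operator inequality from the \emph{proof} of Lemma~\ref{lem:Plenio} together with Lemma~\ref{lem:PsiGaPos}, rather than its statement verbatim.
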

\begin{proof}
  Observe that
  \[
    \sum_ip_i\nor{\rho_i^\Ga}_p=\sum_i\nor{\cA_i^\Ga(\rho^\Ga)}_p=\sum_i(\tr\abs{\cA_i^\Ga(\rho^\Ga)}^p)^{1/p}
    \ ,
  \]
  where $\abs{\cA_i^\Ga(\rho^\Ga)}$ is of course a  positive operator.   Also, for any positive operator $A$ we have that
  \[
    \tr A^p\le(\tr A)^p, \qquad p\geq 1 \ .
  \]
  Thus, we obtain
  \begin{equation}
    \left(\tr\abs{\cA_i^\Ga(\rho^\Ga)}^p\right)^{1/p}\le\tr\abs{\cA_i^\Ga(\rho^\Ga)}\le\tr\cA_i^\Ga(\abs{\rho^\Ga}) \ ,
  \end{equation}
  where the last inequality follows from Lemma \ref{lem:Plenio}. Therefore we have shown that
  \begin{equation}
    \sum_i p_i\nor{\rho_i^\Ga}_p\le\sum_i\tr\cA_i^\Ga(\abs{\rho^\Ga})=\tr\abs{\rho^\Ga} \ .
  \end{equation}
\end{proof}
\subsection{Main result}
\begin{theorem}\label{th:MR}
  The group p-norm negativity
  $\mL_{G,p}(\rho)=\log_G\nor{\rho^\Ga}_p$ associated with a
  subadditive group logarithm, for any $p\geq 1$ is {\em bounded on average} under trace-preserving CP-PPT operations, that is,
  there exists a constant $k(p)$  such that \beq \label{eq:pnormineq} \sum_i p_i
  \mL_{G,p}(\rho_i)-\mL_{G,p}(\rho)\le k(p) \ .  \eeq
\end{theorem}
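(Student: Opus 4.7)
The plan is to chain together, in order, the concavity, monotonicity, and subadditivity of $\log_G$ with the $p$-norm/$1$-norm estimate of Lemma \ref{lemma1}. The key difficulty is that Lemma \ref{lemma1} delivers an upper bound in terms of $\tr|\rho^\Ga|=\|\rho^\Ga\|_1$, whereas the quantity $\mL_{G,p}(\rho)=\log_G\|\rho^\Ga\|_p$ to be subtracted involves a different Schatten norm. Turning this mismatch into a state-independent constant $k(p)$ is exactly the role of the subadditivity hypothesis, and it is what forces $k(p)$ to vanish in the trace-norm limit $p\to 1$.

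Concretely, I would first apply Jensen's inequality, using that $\log_G$ is strictly concave, to obtain
\[
  \sum_i p_i\log_G\|\rho_i^\Ga\|_p\le \log_G\!\left(\sum_i p_i\|\rho_i^\Ga\|_p\right).
\]
Lemma \ref{lemma1} says the argument on the right is at most $\tr|\rho^\Ga|=\|\rho^\Ga\|_1$, so by monotonicity of $\log_G$ one gets $\sum_i p_i\mL_{G,p}(\rho_i)\le \log_G\|\rho^\Ga\|_1$. Now invoke subadditivity on the factorisation $\|\rho^\Ga\|_1=\|\rho^\Ga\|_p\cdot\bigl(\|\rho^\Ga\|_1/\|\rho^\Ga\|_p\bigr)$ to split
\[
  \log_G\|\rho^\Ga\|_1\le \log_G\|\rho^\Ga\|_p+\log_G\!\left(\frac{\|\rho^\Ga\|_1}{\|\rho^\Ga\|_p}\right),
\]
so that subtracting $\mL_{G,p}(\rho)$ from both sides yields $\sum_i p_i\mL_{G,p}(\rho_i)-\mL_{G,p}(\rho)\le \log_G\bigl(\|\rho^\Ga\|_1/\|\rho^\Ga\|_p\bigr)$.

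It remains to bound the ratio $\|\rho^\Ga\|_1/\|\rho^\Ga\|_p$ uniformly in the state. Denoting by $s_1,\dots,s_N$ the singular values of $\rho^\Ga$ and applying Hölder's inequality with conjugate exponents $p$ and $p/(p-1)$ to the pairing $(s_i,1)$ gives $\|\rho^\Ga\|_1=\sum_i s_i\le N^{1-1/p}(\sum_i s_i^p)^{1/p}=N^{1-1/p}\|\rho^\Ga\|_p$. Using monotonicity of $\log_G$ one more time, one can therefore take
\[
  k(p):=\log_G\!\bigl(N^{1-1/p}\bigr),
\]
which is manifestly independent of $\rho$ and of the particular CP-PPT operation, and which satisfies $k(p)\to \log_G(1)=0$ as $p\to 1$ (reproducing the standard logarithmic-negativity monotonicity at $p=1$).

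I expect the main obstacle to be the subadditivity step: Jensen, Lemma \ref{lemma1}, and the Hölder estimate are largely mechanical, but without the subadditivity of $\log_G$ there is no way to cleanly detach the $\rho$-dependent factor $\|\rho^\Ga\|_p$ from the purely dimensional factor $N^{1-1/p}$ inside the logarithm, and hence no way to produce a state-independent $k(p)$. This is precisely the reason why subadditivity is imposed as a hypothesis on the group logarithm.
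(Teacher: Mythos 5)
Your proof is correct and follows essentially the same route as the paper's: Jensen's inequality for the concave $\log_G$, Lemma~\ref{lemma1}, the norm comparison $\nor{\cdot}_1\le N^{1-1/p}\nor{\cdot}_p$, and subadditivity, arriving at the same constant $k(p)=\log_G\bigl(N^{1-1/p}\bigr)$. The only differences are cosmetic --- you apply subadditivity to the factorisation before invoking the dimensional bound rather than after, and you supply the H\"older justification for the norm inequality that the paper merely asserts.
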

\begin{proof} Since by definition a group logarithm is a concave
  function, then
  \begin{equation}\label{eqthm1}
    \sum_ip_i\, \mL_{G,p}(\rho_i)=\sum_i p_i\log_G\nor{\rho_i^\Ga}_p\le\log_G\Big(\sum_i p_i\nor{\rho_i^\Ga}_p\Big)\,.
  \end{equation}
  Now, since $\log_G(x)$ is also a strictly increasing function, by means of Lemma
  \ref{lemma1}, we deduce
  \[
    \sum_i p_i \mL_{G,p}(\rho_i)\le\log_G\tr\abs{\rho^\Ga}\,.
  \]
  Let us denote by $N$ the dimension of the ambient Hilbert space $\cH$. We
  can observe that
  \[
    \tr\abs{\rho^\Ga}=\nor{\rho^{\Ga}}_1\le c(p)\nor{\rho^\Ga}_p \ ,
  \]
  where $c(p)=N^{1-1/p}$, and $\rho\in\cB(\cH)$.

\noi Finally, due to subadditivity of $\log_{G}$,  we have that
  \[
    \sum_ip_i
    \mL_{G,p}(\rho_i)\le\log_G\nor{\rho^\Ga}_1\le\log_Gc(p)+\log_G\nor{\rho^\Ga}_p\,.
  \]
To conclude,  we introduce \beq \label{kdep} k(p):=\log_G c(p). \eeq Thus, the previous
  inequality reduces to relation \eqref{eq:pnormineq}.
\end{proof}
\begin{remark}
 An interesting aspect of inequality \eqref{eq:pnormineq} is that
  $k(p)$ can be made arbitrarily small by
  considering $p$-norms where $p=1+\delta$, with $\delta>0$ arbitrarily close to zero.
Thus, for any $\epsilon >0 $ there exists a value  $\delta$ such that 
\beq \label{eq:qm}
\sum_i p_i \mL_{G,p}(\rho_i)\le \mL_{G,p}(\rho) + \epsilon \ .
\eeq
Due to the latter property, we shall say that the $p$-norm negativity $\mL_{G,p}(\rho)$ is an \textit{$\epsilon$-monotone} (or
  \textit{quasi-monotone}).
\end{remark}

\begin{remark} The hypotheses of Theorem \ref{th:MR} are actually satisfied by an infinite family of group
  logarithms. For instance, the  group  exponential and logarithm considered in Remark 2 possess all required
  properties: Indeed, $\log_q(x)$ is strictly concave, monotonically
  increasing and subadditive for $q>1$. In general, we have
  $ \log_q(\rho^{X} \otimes
  \rho^{Y})=\log_q(\rho^{X})+\log_q(\rho^{Y})+(1-q)\log_q(\rho^{X})\log_q(\rho^{Y})
  \ .  $ We introduce now the corresponding negativity measure.
  \begin{definition}  The \textit{$p$-norm $q$-negativity} for any $\rho\in\cB$ is the function
  \beq \label{eq:pnormqneg} \mathcal{L}_{p}^{(q)}(\rho):=\frac{
    (\parallel \rho^{\Gamma}\parallel_{p})^{1-q}-1}{1-q}, \qquad q>1 \ .  \eeq
\end{definition}

\end{remark}

In summary, the class of subadditive $p$-norm group negativities, defined in this Section, share with the standard negativity the crucial property of being bounded on average, as assured by Theorem \ref{th:MR};  this result represents one of the main contributions of this work. 

\section{Trace-norm group negativities and monotonicity} \label{TNGN}
A particular case of the previous construction, interesting on its own, is the trace-norm class, corresponding to $p=1$ in the previous analysis.

\begin{definition} \label{def:trace-norm}
A trace-norm group negativity $L_G:\cB\to\mathbf{R}$, where $\cB$ is the space of
  bounded linear operators of a Hilbert space $\cH$,  is the function
  \begin{equation} \label{eq:tngn}
    L_{G}(\rho):=\log_{G} \parallel \rho^{\Gamma}\parallel_{1}, \qquad \rho\in\cB,
  \end{equation}
where $\log_{G}(x)$ is a group logarithm of the form \eqref{eq:GL}.
\end{definition}


As an immediate consequence of Theorem \ref{th:MR}, we have the
following result.
\begin{theorem}
  A trace-norm group negativity $L_{G}(\rho)=\log_G\nor{\rho^\Ga}_1$
  associated with a subadditive group logarithm satisfies the monotonicity property: \beq \label{eq:normineq} \sum_i p_i L_{G}(\rho_i)\le
  L_{G}(\rho) \ .  \eeq
\end{theorem}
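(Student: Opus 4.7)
The plan is to deduce this result as an immediate corollary of Theorem~\ref{th:MR} by specializing to $p=1$. The key observation is that the constant $k(p):=\log_G c(p)$ with $c(p)=N^{1-1/p}$ satisfies $c(1)=N^{0}=1$, and since any group logarithm vanishes at $1$ (by the normalization $\log_G(1)=0$ in its definition), we obtain $k(1)=0$. Substituting into inequality~\eqref{eq:pnormineq} gives precisely \eqref{eq:normineq}. Notably, at $p=1$ the subadditivity hypothesis becomes vacuous in this step, because the excess term $\log_G c(p)$ that forced it in the general argument collapses to zero; subadditivity of $\log_G$ is retained in the statement only for consistency with the hypotheses of Theorem~\ref{th:MR}.

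Alternatively, one can give a fully self-contained three-line derivation mimicking the proof of Theorem~\ref{th:MR}. First, by concavity of $\log_G$ together with Jensen's inequality,
\begin{equation*}
\sum_i p_i L_G(\rho_i)=\sum_i p_i\log_G\nor{\rho_i^\Ga}_1\le \log_G\Big(\sum_i p_i\nor{\rho_i^\Ga}_1\Big).
\end{equation*}
Second, by Lemma~\ref{lemma1} specialized to $p=1$, one has $\sum_i p_i\nor{\rho_i^\Ga}_1\le \tr\abs{\rho^\Ga}=\nor{\rho^\Ga}_1$. Third, since $\log_G$ is strictly increasing, combining the two previous inequalities yields
\begin{equation*}
\sum_i p_i L_G(\rho_i)\le \log_G\nor{\rho^\Ga}_1=L_G(\rho),
\end{equation*}
which is \eqref{eq:normineq}. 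No use of the dimensional comparison $\nor{\cdot}_1\le c(p)\nor{\cdot}_p$ is required here, precisely because the trace-norm is already the relevant norm on both sides.

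There is really no hard step: the whole content of the theorem is that the obstruction $k(p)$ present in the $p$-norm case disappears when $p=1$. If I had to identify a subtle point, it would simply be to flag that the concavity step uses Jensen's inequality with the convex combination $\{p_i\}$ (which are genuine probabilities because $\sum_i p_i=\sum_i \tr\cA_i(\rho)=\tr\cA(\rho)=1$ by trace preservation), and that Lemma~\ref{lemma1} applies without modification to the $p=1$ case (indeed the inner inequality $\tr A^p\le(\tr A)^p$ becomes an equality at $p=1$, which is harmless). I would therefore present the result as a direct corollary, with at most a one-line reminder that the constant $c(1)=1$ forces the bound to tighten into true monotonicity.
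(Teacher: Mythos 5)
Your proposal is correct and matches the paper's own proof, which likewise just specializes Theorem~\ref{th:MR} to $p=1$ and observes that $c(1)=1$ forces $k(1)=0$. Your additional remark that the subadditivity hypothesis becomes vacuous at $p=1$ (since the self-contained Jensen-plus-Lemma~\ref{lemma1} argument never invokes it) is a valid and slightly sharper observation than what the paper records, but it does not change the route.
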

\begin{proof}
  It suffices to assume $p=1$ in the previous discussion. In
  particular, we have identically $c(1)=1$ and $k(1)=0$ into
  Eq. \eqref{eq:pnormineq}.
\end{proof}

The trace-norm group negativities can be regarded as a new quantum
version of the $Z$-entropies
introduced in~\cite{PT2016PRA}. The main novelty of the present construction is that
the functional \eqref{eq:tngn} is the trace-norm of the partial transposition of a quantum
state whose spectrum need not be, in general, a probability
distribution. 
\begin{proposition}\label{prop:2} Any trace-norm group negativity is
  positive semi-definite over the space $\cB$ of bounded linear
  operators of a Hilbert space $\cH$ and vanishes for states with
  positive partial transpose. Furthermore, it is strictly composable: if
  $\cH=\cH_1\otimes\cH_2$ then
  \[
    L_G(\si\otimes\tau)=\Phi(L_G(\si),L_G(\tau))
  \]
  for any pair of states $\si\in\cB_1$ and $\tau\in\cB_2$ where
  $\Phi(x,y)=G(G^{-1}(x)+G^{-1}(y))$.
\end{proposition}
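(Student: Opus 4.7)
The plan is to verify each of the three claims directly from the definition, using standard properties of the trace norm together with the defining functional equation of $\log_{G}$.

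First I would handle non-negativity. Since partial transposition preserves the trace, the Hermitian operator $\rho^\Ga$ satisfies $\tr\rho^\Ga = \tr\rho = 1$. Therefore $\nor{\rho^\Ga}_1 = \tr\abs{\rho^\Ga}\ge \abs{\tr\rho^\Ga}=1$. As $\log_G$ is strictly increasing with $\log_G(1)=0$, this immediately gives $L_G(\rho)\ge 0$. For the vanishing claim: if $\rho^\Ga\ge 0$ then $\abs{\rho^\Ga}=\rho^\Ga$, so $\nor{\rho^\Ga}_1=\tr\rho^\Ga=1$, and hence $L_G(\rho)=\log_G(1)=0$.

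For composability, the key steps are: (i) partial transposition factorizes on tensor products, i.e.\ $(\sigma\otimes\tau)^\Ga=\sigma^\Ga\otimes\tau^\Ga$; (ii) the trace norm is multiplicative on tensor products, so $\nor{\sigma^\Ga\otimes\tau^\Ga}_1=\nor{\sigma^\Ga}_1\cdot\nor{\tau^\Ga}_1$; (iii) one then applies $\log_G$ and invokes the functional equation \eqref{glog} together with the explicit form \eqref{GPsi} of the associated group law $\chi(x,y)=G(G^{-1}(x)+G^{-1}(y))$. Chaining (i)--(iii) yields $L_G(\sigma\otimes\tau)=\chi(L_G(\sigma),L_G(\tau))=\Phi(L_G(\sigma),L_G(\tau))$, which is exactly the claimed identity.

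The only conceptual subtlety lies in step (i) of the composability argument: it is implicit that $\sigma$ and $\tau$ live on Hilbert spaces which themselves carry the bipartite structure on which $\Ga$ is defined, so that partial transposition on the joint system splits as the product of the partial transpositions on each factor. Once this bookkeeping is settled, the remainder is a formal consequence of the trace-norm identity on tensor products and the very definition of a group logarithm; no substantial obstacle arises.
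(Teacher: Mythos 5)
Your proof is correct and follows essentially the same route as the paper's: non-negativity and vanishing on PPT states via trace preservation under partial transposition together with $\log_G(1)=0$ and strict monotonicity, and composability via the defining functional equation \eqref{glog}. Your version is somewhat more explicit than the paper's (which dispatches composability in one line), and you rightly flag the tensor-product bookkeeping for $\Ga$ that the paper leaves implicit, but there is no substantive difference in method.
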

\begin{proof}
  Since $\tr\rho=\tr\rho^\Ga$, it follows that
  $\bminus{\mss\rho^\Ga\mss}=0$ only when $\nor{\rho^\Ga}_1=1$.
  Definition~\ref{def:trace-norm} implies that $L_G(\rho)=0$ in this case, whereas
  $L_G(\rho)>0$ in the other cases (being $\log_G$ strictly
  increasing). Composability is assured by the functional equation associated with the group logarithm $\log_{G}(x)=G(\ln x)$.
\end{proof}

In our framework, the original logarithmic negativity~\cite{VW2002} corresponds to the choice
$G(t)=t$ which leads to the additive group $\Phi(x,y)=x+y$. A new non-trivial example is provided by the use of the $q$-logarithm
of Eq.~\eqref{qlog}.

\begin{definition}\label{def:qneg} The trace-norm $q$-negativity of a
  state $\rho\in\cB$ is the function
  \begin{equation*}
    L^{(q)}(\rho):=\frac{\nor{\rho^\Ga}_1^{1-q}-1}{1-q}, \qquad q>1  \,.
  \end{equation*}
\end{definition}
For $q\to 1$, it reduces to the logarithmic negativity $\lim_{q\to 1}L^{(q)}(\rho)=L(\rho)$. Furthermore,
for any pair $\si\in\cB_1$ and $\tau\in\cB_2$ it follows
\[
  L^{(q)}(\si\otimes\tau)=L^{(q)}(\si)+L^{(q)}(\tau)+(1-q)L^{(q)}(\si)L^{(q)}(\tau)\,.
\]
The trace-norm $q$-negativity is thus associated to the same
composition law  (called in algebraic topology the multiplicative formal group law \cite{Haze}), namely $\Phi(x,y)=x+y+(1-q)x\,y$ as both the classical and the quantum versions of
the Tsallis entropy~\cite{Tsallis1988,Tsallis2009}. It is
worth mentioning that the entanglement entropy associated to the Tsallis
entropy (its evaluation over the reduced density matrix of a bipartite
pure state) has been used in~\cite{TLB2001} to characterize the
separability of a family of quantum states, correctly recovering Peres
criterion for a concrete family of states.

Actually, Definitions ~\ref{def:trace-norm} and \ref{def:qneg} are naturally adapted to the Peres criterion [24], which can be applied to any state. Whenever the Peres criterion is necessary and sufficient, then $L_{(q)}(\rho)>0$ and, generally speaking, $L_G(\rho)>0$ for entangled states only. However, in a more general context (i.e. for higher dimensional spaces), $L_G(\rho)>0$ for NPT-entangled states only (here, by NPT-entangled states we mean quantum states with non-positive partial transpose).


Clearly,
the trace-norm $q$-negativity suggests further generalization in terms
of general entanglement witnesses \cite{PV2007}, namely quantities that separate an
entangled state from the set of separable ones in more general
scenarios, whereas partial transpose separates in a necessary and sufficient way quantum systems
associated to Hilbert spaces of dimension strictly lower than eight,
namely when $\min_{i=1,2}\dim\cH_i=2$ and $\max_{i=1,2}\dim\cH_i=2,3$.

When $\dim\cH\ge 8$, positivity of partial transposition is only necessary for separability and thus there exist entangled states with positive partial transpose. For all of them, $\bminus{\mss\rho^\Ga\mss}=0$ and $L_G(\rho)=0$, in particular, when $G(t)=t$ one recovers the well-known fact that the logarithmic negativity vanishes $L(\rho)=0$ on PPT entangled states.

\begin{remark}
The fact that trace-norm group negativities  are strictly composable  is a non-trivial property, essentially related to their non-trace-form
functional expression. Indeed, when dealing with standard entropies over a probability space, classical strict
 composability prevents the use of infinitely many trace-form entropies, namely functions of probability distributions
 $(p_1,\ldots,p_W)$ of the form $
   \sum_{i=1}^{W}f(p_i)$, $f(0)=f(1)=0$.
Precisely, a theorem proved in \cite{ET2017} states, under mild  hypotheses, that the most general trace-form entropy which is
strictly composable is Tsallis entropy (recovering Boltzmann's entropy when $q\to 1$). Thus, using the more commonly adopted trace-form functionals one is lead to \textit{weakly} composable group entropies \cite{PT2016AOP}, which are composable  over the product of uniform distributions only. Instead, strictly composable entropies are
allowed in the non-trace-form functional class.   Indeed, each of the $Z$-entropies in Eq.~\eqref{eq:Z} is
strictly composable, with a specific composition law;
thus, one can associate a trace-norm group negativity with each of them. The first of such pairs
is represented by the original Tsallis $q$-entropy and the trace-norm
$q$-negativity of Definition~\ref{def:qneg}.
\end{remark}

%

To summarize, the group norms for $p=1$ possess more regularity properties (they are entanglement monotones) and can be used as kind of nonadditive entanglement witnesses. 

To complete the discussion,  we shall focus on the additive case for $p\geq 1$.

\section{p-norm additive negativity as an upper bound for distillability}\label{Sec4}
A crucial property of the additive $p$-norm \eqref{eq:paddneg}
introduced in the present work is the fact that it represents an upper bound for distillability. In our analysis, we shall closely follow the notation and the discussion of Ref. \cite{VW2002}.


We define
$\mathcal{L}_{p}(\Omega):=\ln \nor\Omega_p= \frac{1-p}{p} \ln N $, where
$\Omega$ is the (diagonal) density matrix of the maximally mixed
(separable) state. Then, we can introduce the \textit{normalized p-norm negativity}
\beq \label{eq:normpnorm} \tilde{\mL}_p(\rho):= \mathcal{L}_{p}(\rho)
-\mathcal{L}_{p}(\Omega) \ .  \eeq Note that the standard (trace-norm)
logarithmic negativity is already normalized:
\[
  \tilde{\mL}_1(\rho)=L(\rho) \ .
\]
In a completely analogous way, one can normalize any $p$-norm group negativity.

Assume that we have a bipartite state $\rho$ and multiple copies of it obtained by means of LOCC. We recall that its distillation rate is the best rate at which we can extract near-perfect singlet states from its copies. In particular, given a large number of copies of the state, its \textit{asymptotic} distillation rate is called its entanglement of distillation $E_{D}(\rho)$.

Let us consider $n_{\alpha}$ copies of $\rho$ and let $Y$ be a
maximally entangled state of two qubits. Then, we are interested in
the best approximation to $m_{\alpha}$ copies of $Y$ that can be
obtained from $\rho^{\otimes n_{\alpha}}$ by means of LOCC.

We introduce \cite{VW2002} 
\beq \label{eq:infimum} \Delta(Y^{\otimes
  m_{\alpha},\rho^{\otimes n_{\alpha}}})= \text{inf}_{\mathscr{P}} \parallel Y^{\otimes
  m_{\alpha}} - \mathscr{P}(\rho^{\otimes n_{\alpha}}) \parallel_1 \ .  
\eeq
Here $\mathscr{P}$ runs over all deterministic protocols obtained from LOCC.

We say that $c$ is an achievable distillation rate for $\rho$ , if for
any sequences $n_{\alpha}$, $m_{\alpha}\to \infty$ of integers such
that $\limsup_{\alpha}(n_{\alpha}/ m_{\alpha})\leq c$ we have \beq
\lim_{\alpha}~\Delta(Y^{\otimes m_{\alpha},\rho^{\otimes
    n_{\alpha}}})=0 \ .  \eeq Thus, the distillable entanglement is
the supremum of all achievable distillation rates. If we allow a small
error level, we can introduce the distillable entanglement at error
level $\epsilon$, denoted by $E_{D}^{\epsilon}(\rho)$, which is
characterized by the weaker condition \beq
\lim_{\alpha}~\Delta(Y^{\otimes m_{\alpha},\rho^{\otimes
    n_{\alpha}}})\leq \epsilon \ .  \eeq

In this context, our main result is the following

\begin{theorem} \label{th:distill} Let $\tilde{\mathcal{L}}_p(\rho)$
  be the normalized logarithmic p-norm negativity. Then, for any $p\geq 1$
  we have \beq \label{eq:dist} \tilde{\mL}_p(\rho)\geq
  E_{D}^{\epsilon} \ .  \eeq
\end{theorem}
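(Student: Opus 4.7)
The plan is to reduce Theorem~\ref{th:distill} to the Vidal--Werner bound $L(\rho) \geq E_D^\epsilon(\rho)$ already recalled in Section~\ref{NPPT}, by establishing that the normalized $p$-norm negativity dominates the ordinary logarithmic negativity, i.e.\ $\tilde{\mathcal{L}}_p(\rho) \geq L(\rho)$ for every $p \geq 1$, with equality at $p=1$. The key ingredient is the elementary Hölder comparison between Schatten norms in finite dimension.

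First I would recall that for any $A \in \mathcal{B}(\mathcal{H})$ with $\dim\mathcal{H} = N$ and $p \geq 1$,
\[
\|A\|_1 \;\leq\; N^{\mss 1-1/p}\,\|A\|_p,
\]
a consequence of the corresponding inequality $\|x\|_1 \leq N^{1-1/p}\|x\|_p$ applied to the vector of singular values of $A$. Setting $A = \rho^\Ga$ and taking logarithms,
\[
L(\rho) \;=\; \ln\|\rho^\Ga\|_1 \;\leq\; \left(1-\tfrac{1}{p}\right)\ln N + \mathcal{L}_p(\rho).
\]
A one-line computation gives $\|\Om\|_p = (N\cdot N^{-p})^{1/p} = N^{1/p-1}$, so $\mathcal{L}_p(\Om) = (1/p-1)\ln N$ and the Hölder correction is precisely $-\mathcal{L}_p(\Om)$. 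Substituting,
\[
L(\rho) \;\leq\; \mathcal{L}_p(\rho) - \mathcal{L}_p(\Om) \;=\; \tilde{\mathcal{L}}_p(\rho).
\]
Chaining with the Vidal--Werner bound yields $\tilde{\mathcal{L}}_p(\rho) \geq L(\rho) \geq E_D^\epsilon(\rho)$, as required.

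If a self-contained proof avoiding the appeal to~\cite{VW2002} were wanted, I would instead mimic the original Vidal--Werner argument with $\tilde{\mathcal{L}}_p$ in place of $L$: additivity $\tilde{\mathcal{L}}_p(\rho^{\otimes n}) = n\tilde{\mathcal{L}}_p(\rho)$ follows from multiplicativity of Schatten norms under tensor products combined with additivity of the normalising shift, and the initialisation $\tilde{\mathcal{L}}_p(Y)=\ln 2$ follows from a direct spectral computation on a qubit Bell pair (whose partial transpose has four eigenvalues $\pm 1/2$). The main obstacle in that route is LOCC-monotonicity of $\tilde{\mathcal{L}}_p$: Theorem~\ref{th:MR} only furnishes monotonicity of $\mathcal{L}_p$ up to the additive error $k(p) = (1-1/p)\ln N$, and one must verify that subtracting $\mathcal{L}_p(\Om)$ exactly cancels this dimension-dependent excess so that the normalised functional is a genuine LOCC monotone under deterministic trace-preserving protocols $\mathscr{P}$. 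A secondary difficulty is quantifying continuity of $\tilde{\mathcal{L}}_p$ in trace distance well enough to absorb the $\epsilon$ allowed in the limit definition of an achievable rate. The Hölder reduction sketched above bypasses both difficulties at the cost of invoking the $p=1$ case as a black box.
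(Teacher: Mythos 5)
Your argument is correct and coincides with the paper's own proof: both rest on the Schatten-norm comparison $\nor{A}_1\le N^{1-1/p}\nor{A}_p$, the observation that the dimension-dependent correction $(1-1/p)\ln N$ equals $-\mathcal{L}_p(\Om)$, and the Vidal--Werner bound $L(\rho)\ge E_D^{\epsilon}$ used as a black box. The additional remarks on a self-contained route are sensible but go beyond what the paper does.
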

\begin{proof}
  As is well known \cite{VW2002}, the standard logarithmic negativity
  satisfies the upper bound
  \beq\label{UB}
  L({\rho}) \geq E_{D}^{\epsilon} .
  \eeq
  We also remind the
  inequalities ($p>1$) \beq \label{eq:normineq}
  \parallel\cdot\parallel_p \hspace{1mm} \leq
  \parallel\cdot\parallel_1 \leq N^{1-1/p} \parallel\cdot\parallel_p \
  .  \eeq Consequently, from the first inequality \eqref{eq:normineq}, we get
  \beq \label{eq:fineq} \mathcal{L}_{p}(\rho)\leq L(\rho).  \eeq
  From the second one, we have
\[
  L{(\rho)}\leq -\mL_p(\Omega)+ \mL_p(\rho) \ .
\]
Thus, due to inequality \eqref{UB}, we get
\beq \label{eq:secineq}
\mathcal{L}_{p}(\rho) - \mL_p(\Omega) \geq E_{D}^{\epsilon} \ .
\eeq
Using Definition \ref{eq:normpnorm}, we conclude that
\beq
\tilde{\mL}_p(\rho)\geq E_{D}^{\epsilon} \ .
\eeq
\end{proof}
\section{Future Perspectives} \label{Sec5}
As we have shown, group theory offers a natural way to generalize the
notion of negativity. This work represents a first exploration of a
new, infinite class of easily computable entropic-type measures of entanglement. We have introduced the $p$-norm and trace-norm group negativities and focused on the mathematical study of some of the main analytical properties of the class. Our main result is the construction of a large family of composable entanglement monotones.

Several aspects of the theory deserve further analysis. It is clear
that composability is crucial in order to compute entanglement entropy
of bipartite or multipartite systems in a natural way, starting from
the knowledge of the entropy of its constituents.  As we suggested,
such a property is fundamental to study distillable
entanglement. Therefore, an interesting open problem is to ascertain
if all of the group-theoretical negativities introduced here, apart
the logarithmic $p$-norm negativity, can provide upper or lower bounds
to the asymptotic distillation rate by means of LOCC, when we consider
a large number of copies of the state $\rho^{\otimes n_{\alpha}}$.

At the same time, it would be very interesting to apply the large
family of entropic  functionals introduced in this work in the study of
finite temperature systems in conformal field theories
\cite{CCTJPA2014}. From this point of view, one-parametric (or
multi-parametric) entanglement monotones could play a role similar to
that played by R\'enyi's entropy in the case of the entanglement
detection of the ground state of one-dimensional many body systems, and
in the study of their criticality properties \cite{Calabrese2004}.

We wish to point out that the language of formal group theory can be
directly related to the study of alternative formulations of both
classical and quantum mechanics.  Indeed, as shown in \cite{EIMM2007},
the linear structure of the theory can be replaced by a non-additive
structure generated by means of a suitable diffeomorphism, which would
play the same role as the group logarithm of the present theory. In
particular, this perspective opens the possibility of performing
non-equivalent Weyl quantizations of physical systems, circumventing
the von Neumann uniqueness theorem. The generalized negativities
introduced in the present work could play a significant role in these
alternative formulations. We shall discuss these aspects in detail
elsewhere.

Another interesting problem is to give an interpretation of both group negativities  and the quantum version of the relative entropies \eqref{eq:relGE} within the context of quantum information geometry,
especially in connection with the problem of tomographic
reconstruction of quantum metrics \cite{Amari2001}, \cite{Amari2016},
\cite{MMVV2017}, \cite{CCLMMVV2018}. Also, an intriguing problem is to establish
a connection between the approach proposed in this article and that one developed
independently in \cite{WW2019}.

Finally, we also plan to apply generalized negativities to the study
of entanglement properties of some concrete examples of quantum
systems, in particular integrable spin chains of Haldane-Shastry type
\cite{Haldane1988} \cite{Shastry1988}.  

Work is in progress along these lines.

\section*{Acknowledgement}
The authors wish to thank the anonimous referees for useful suggestions, which improved the readability of the article. 

\vspace{2mm}

J.C. would like to thank Aleksander M. Kubicki for several enlightening discussions.  

G.M. would like to thank the support provided
by the Santander/UC3M Excellence Chair Programme 2019/2020. 

The research of P.T. has been supported by the research project
PGC2018-094898-B-I00, Ministerio de Ciencia, Innovaci\'{o}n y Universidades and Agencia Estatal de Investigaci\'on,
Spain, and by the Severo Ochoa Programme for Centres of Excellence in R\&D
(CEX2019-000904-S), Ministerio de Ciencia, Innovaci\'{o}n y Universidades y Agencia Estatal de Investigaci\'on, Spain. 

G. M and P. T. are members of the Gruppo Nazionale di Fisica Matematica (INDAM), Italy.

\begin{section}{Appendix:Formal groups and formal rings}\label{subsec:formal}

Let $R$ be a commutative associative ring with identity, and
$R\llbracket x_1,x_2,\ldots\rrbracket$ be the ring of formal power
series in the variables $x_1,x_2,\ldots$ with coefficients in $R$. We
shall assume that $R$ is torsion-free.

\begin{definition}{\cite{Bochner1946}}\label{def:fgl} A commutative one-dimensional
  formal group law over $R$ is a formal power series
  $\Phi\in R\llbracket x,y\rrbracket$ such that
  \begin{enumerate}
  \item $\Phi \left( x,0\right) =\Phi \left( 0,x\right) =x$
  \item
    $\Phi \left( \Phi \left( x,y\right) ,z\right) =\Phi \left( x,\Phi
      \left( y,z\right) \right).$
  \end{enumerate}
  The formal group law is said to be commutative if
  $\Phi(x,y)=\Phi(y,x)$.
\end{definition}

Observe that the existence of an inverse formal series
$\vp\in R\llbracket x\rrbracket$ such that $\Phi(x,\vp(x))=0$ is a direct
consequence of the previous definition. This justifies the ``group'' terminology for these algebraic structures.

Let $B=\mathbf{Z}\llbracket b_1,b_1,\ldots\rrbracket$ and consider the
following series in $B\llbracket s\rrbracket$
\begin{subequations}\label{eqs:FG}
  \begin{equation}\label{eq:F}
    F(s)=s+\sum_{i=1}^{\infty}b_i\frac{s^{i+1}}{i+1}\,.
  \end{equation}
If $G\in B\llbracket t\rrbracket$ is its compositional inverse (namely $F(G(t))=t$ and $G(F(s))=s$), one has
  \begin{equation}\label{I.2}
    G(t)=t+\sum_{k=1}^{\infty}a_k\frac{t^{k+1}}{k+1} \ ,
  \end{equation}
\end{subequations}
with $a_{1}=-b_1$, $a_2=\frac32b_1^2-b_2, \ldots$.
 Given the formal power series $F$ and $G$ as in Eqs.~\eqref{eqs:FG}, the Lazard formal
group law~\cite{Haze} is defined by the formal power series
\[
  \Phi_{\text{L}}(s_1,s_2)=G(G^{-1}(s_1) +G^{-1}(s_2))
\]
whose coefficients, generate
over $\mathbf{Z}$ a subring $L\subset B\otimes\mathbf{Q}$. In other words, the Lazard ring is defined over a subring of the original ring $B\otimes\mathbf{Q}$, called the
Lazard ring.

For any commutative one-dimensional formal group law over any ring $R$,
there exists a unique homomorphism $L\to R$, under which the Lazard
group law is mapped into the given group law. This is called the
universal property of the Lazard group. Also, it is important to notice that for any commutative one-dimensional formal group law $\Phi(x,y)$ over $R$, there exists a series $\phi(x)$ such that
\[
\phi(x)= x+ O(x^2), \quad \text{and} \quad \Phi(x,y)= \phi^{-1}\left(\phi(x)+\phi(y)\right).
\]
Finally, let us also define
the notion of formal ring (see~\cite{Carrasco2019}).

\begin{definition}\label{def:formalring}
  Let $\left(R,+,\cdot\right)$ be a unital ring. A formal ring is a
  triple $(R,\Phi,\Psi)$ where
$\Phi,\Psi\in R\llbracket x,y\rrbracket$ are formal power series such
that
\begin{enumerate}
\item $\Phi$ is a commutative formal group law.
\item $\Psi$ satisfies the relations  \begin{align*}
  \Psi(\Psi(x,y),z)&=\Psi(x,\Psi(y,z))\\
   \Psi(x,\Phi(y,z))&=\Phi(\Psi(x,y),\Psi(x,z))\\
    \Psi(\Phi(x,y),z)&=\Phi(\Psi(x,z),\Psi(y,z)).
 \end{align*}
\end{enumerate}
The formal ring will be said to be commutative if
$\Psi(x,y)=\Psi(y,x)$.
\end{definition}

\end{section}

\end{document}